\documentclass[10pt]{article}
\usepackage[utf8]{inputenc}
\usepackage[T1]{fontenc}

\usepackage{amsmath}
\usepackage{amssymb}
\usepackage{amsthm}
\usepackage[a4paper,left=2cm,top=2cm,right=2cm,bottom=2cm]{geometry}
\usepackage{mathtools}
\usepackage{braket}
\usepackage{tikz}
\usepackage[frozencache]{minted}

\usetikzlibrary{calc,positioning,shapes.geometric,shapes.misc}

\newtheorem{corollary}{Corollary}
\newtheorem{definition}{Definition}
\newtheorem{example}{Example}
\newtheorem{lemma}{Lemma}
\newtheorem{theorem}{Theorem}
\DeclareMathOperator{\ite}{ite}

\title{The multiplicative complexity of interval checking}
\author{Thomas Häner and Mathias Soeken \\ \normalsize Microsoft Quantum, Switzerland}
\date{}

\begin{document}

\maketitle

\begin{abstract}
We determine the exact AND-gate cost of checking if $a\leq x < b$, where $a$ and $b$ are constant integers. Perhaps surprisingly, we find that the cost of interval checking never exceeds that of a single comparison and, in some cases, it is even lower.
\end{abstract}

\section{Introduction}
The multiplicative complexity of a Boolean function $f$ is the smallest number of AND gates in any logic network over the gate set $\{\text{AND, XOR, NOT}\}$ that implements $f$. Multiplicative complexity is used as an important characteristic metric to measure the cost of cryptographic implementations in secure computation protocols~\cite{ARS+15,CDG+17,GMO16} or the cost of fault-tolerant implementations of quantum operations~\cite{MSC+19}.  Unfortunately, computing the multiplicative complexity is intractable~\cite{Find14} and for a random $n$-variable Boolean function $f$ it is at least~$2^{n/2} - O(n)$ with high probability~\cite{BPP00}.  However, for several families of Boolean functions the multiplicative complexity has been analyzed, including quadratic functions~\cite{MS87}, all functions up to 6 variables~\cite{CTP19}, all functions with a multiplicative complexity of at most~4~\cite{CTP20}, all symmetric functions~\cite{BCTP19}, and the Hamming weight function~\cite{BP05}.

In this paper, we determine the multiplicative complexity of the interval check $[a \le x < b]$, where $a$ and $b$ are two nonnegative constant integers and $x$ is an $n$-bit nonnegative integer.  We derive an upper bound on the multiplicative complexity by proposing a construction to implement the interval check, and we derive a matching lower bound based on the algebraic degree of the function (i.e., the largest monomial in its algebraic normal form), thus proving that our construction is optimal with respect to the number of AND gates. We state our main result in Theorem~\ref{thm:main}.



\begin{theorem}[Main result]
\label{thm:main}
Let $n>0$ be the number of bits, and let $a,b$ be two constant integers $a < b < 2^n$. Let $j_a$ and $j_{b}$ denote the number of trailing zeros\footnote{The number of trailing zeros in the binary representation of $a\geq 0$ is the largest integer $j \leq n$ such that $\frac a{2^j}$ is an integer.} in the binary representation of $a$ and $b$, respectively.  Then, the interval check $a \le x < b$ for some arbitrary $n$-bit nonnegative integer $x$ has a multiplicative complexity of
\begin{equation}
    \begin{cases}
        n - \min\{j_a, j_{b}\} - 1 & \text{if $j_a \neq j_{b}$,} \\
        n - j_a - 2 & \text{otherwise}.
    \end{cases}
\end{equation}
\end{theorem}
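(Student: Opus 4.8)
The plan is to prove the upper and lower bounds separately and to connect them through the algebraic normal form (ANF) of the function. The starting point for both directions is that, since $a<b$, the event $x\ge b$ implies $x\ge a$, so $[a\le x<b]=[x\ge a]\oplus[x\ge b]$: the interval check is the XOR of two comparisons, and XOR and NOT gates are free in the cost model. For the lower bound I would invoke the standard degree argument: any network with $c$ AND gates computes a polynomial of ANF-degree at most $c+1$ (over $\mathbb{F}_2$ one has $P^2=P$ for every multilinear $P$, which obstructs the squaring route to degree doubling), so $\deg f-1$ is a lower bound on the multiplicative complexity. The entire lower bound therefore reduces to computing $\deg[a\le x<b]$.

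First I would pin down the ANF of a single comparison. Unrolling the bitwise recursion for $[x\ge t]$ gives $c_i=x_i c_{i-1}$ when $t_i=1$ and $c_i=x_i\oplus c_{i-1}\oplus x_i c_{i-1}$ when $t_i=0$, with $[x\ge t]=c_{n-1}$; an induction then shows $\deg[x\ge t]=n-j_t$ with a unique leading monomial $\prod_{i=j_t}^{n-1}x_i$ that depends only on $j_t$ and not on the remaining bits of $t$. Equivalently, writing $t=2^{j_t}s$ with $s$ odd, the comparison only reads the bits $x_{j_t},\dots,x_{n-1}$, i.e. $[x\ge t]=[\lfloor x/2^{j_t}\rfloor\ge s]$. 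This lets me factor out the common trailing zeros: with $j^\ast=\min\{j_a,j_b\}$ the interval check equals $[s_a\le\lfloor x/2^{j^\ast}\rfloor<s_b]$ on $m=n-j^\ast$ bits, where at least one of $s_a,s_b$ is odd.

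Now I split into the two cases. If $j_a\neq j_b$ the comparisons have distinct degrees $n-j_a$ and $n-j_b$, their leading monomials cannot cancel, and hence $\deg[a\le x<b]=n-\min\{j_a,j_b\}$, matching the first branch after subtracting one. If $j_a=j_b=j$ both comparisons share the leading monomial $\prod_{i=j}^{n-1}x_i$, which cancels in the XOR, so the degree drops to at most $n-j-1$; the delicate point — and the step I expect to be the main obstacle — is to show the degree is exactly $n-j-1$, i.e. that the second-highest homogeneous parts do not also cancel. For this I would compute the degree-$(m-1)$ part $H_t$ of each comparison explicitly and read off its coefficients as a function of the bits $t_1,\dots,t_{m-1}$; since $a$ and $b$ are distinct with equal trailing-zero count they differ in some higher bit, and checking $H_a\neq H_b$ in every such case (including the borderline one where $a$ and $b$ differ only in bit $1$) produces a surviving degree-$(m-1)$ monomial and hence the exact degree.

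Finally, for the matching upper bound I would give an explicit construction and count its AND gates. Writing $r$ for the most significant bit at which $a$ and $b$ differ, membership forces the bits of $x$ above $r$ to equal the common prefix, and conditioning on $x_r$ turns the interval test into a multiplexer $\ite(x_r,[x_{<r}<b_{<r}],[x_{<r}\ge a_{<r}])$ guarded by a single prefix-equality conjunction. Sharing this prefix and the single multiplexer between the two comparisons is exactly what collapses the cost to that of the harder single comparison (and one less in the tied case); the remaining work is the bookkeeping that the AND count of this network equals the degree-based lower bound in both branches, which certifies optimality.
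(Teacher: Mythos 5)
Your overall strategy --- the XOR decomposition $[a\le x<b]=[a\le x]\oplus[b\le x]$, Schnorr's degree bound for the lower bound, and an explicit construction for the upper bound --- is the same as the paper's, and your lower-bound half is essentially sound: the case $j_a\neq j_b$ follows from the non-cancelling leading monomials, and your plan for the tied case (show that the degree-$(m-1)$ homogeneous parts of the two comparison chains cannot cancel completely) is exactly what the paper establishes, though it does so more cleanly: it takes the \emph{largest} position $k$ at which the operators of the two chains differ and exhibits a single monomial, $x_1\land\cdots\land x_{k-1}\land x_{k+1}\land\cdots\land x_n$, that occurs in the ANF of one chain but not the other. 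You flagged this step as your main obstacle; it is real work, but your coefficient-by-coefficient route would close it.

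The genuine gap is in the upper bound. Your network is ``prefix-equality conjunction, then one multiplexer $\ite(x_r,[x_{<r}<b_{<r}],[x_{<r}\ge a_{<r}])$,'' and you assert that sharing the prefix and ``the single multiplexer'' collapses the cost to that of the harder comparison, leaving only bookkeeping. That is false as stated: a multiplexer of two arbitrary functions $g,h$ costs $1+c_\land(g)+c_\land(h)$ AND gates (e.g.\ via $h\oplus(x\land(g\oplus h))$), i.e.\ the \emph{sum} of the branch costs, not their maximum. Concretely, take $a=1$ and $b=2^{n-1}+1$: there is no common prefix, $r$ is the top bit, and the two branches are an OR-chain and an AND-chain on the remaining $n-1$ variables, each needing $n-2$ AND gates, so your network as described uses about $2n-3$ ANDs while the theorem claims $n-2$ (here $j_a=j_b=0$). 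Closing this gap is precisely the paper's central construction: the selector $x_i$ is XOR-ed into the chain variables and the negation of one branch is pushed inward by De Morgan, yielding merging identities such as $\ite(x_i,\overline{x_j\lor f_2},x_j\land f_1)=(x_i\oplus x_j)\land\ite(x_i,\bar f_2,f_1)$ and $\ite(x_i,\overline{x_j\land f_2},x_j\land f_1)=x_i\oplus(x_j\land(x_i\oplus\ite(x_i,\bar f_2,f_1)))$, which process the two chains \emph{jointly} at a cost of one AND per position --- i.e.\ $\max$, not sum --- with the terminal case $\ite(x_i,\bar x_j,x_j)=x_i\oplus x_j$ costing nothing; that free terminal case is also the only source of the extra saved gate when $j_a=j_b$, a mechanism your sketch never supplies. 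Without these identities and the accompanying gate-count induction (the paper's lemmas on $u(\ite(x,\bar f_2,f_1))$ and $u(f_1\oplus f_2)$), the upper bound does not meet the lower bound and the theorem is not proved.
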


We present our method for interval checking using an optimal number of AND gates in Section~\ref{sec:construction}. We then analyze our construction and give a proof of Theorem~\ref{thm:main} in Section~\ref{sec:analysis}.

\section{Background}\label{sec:background}
In this section, we introduce algebraic normal forms, algebraic degree, and AND/OR chains, which are a family of Boolean functions central to the implementation of comparison with constants.  We use $\#S$ to denote the cardinality of some set $S$, and we use $\ite(x, f, g) := (x \land f) \oplus (\bar x \land g)$ to denote the \emph{if-then-else} function.

\begin{definition}[Algebraic normal form]
    Let $S = \{1, \dots, n\}$ and $x_i\in\{0,1\}$.  Then
    \begin{equation}
        \label{eq:anf}
        f(x_1, \dots, x_n) = \bigoplus_{I \subseteq S} a_I\bigwedge_{i \in I}x_i
    \end{equation}
    is the \emph{algebraic normal form}~(ANF) of $f$ for some assignment to the coefficients $a_I\in\{0,1\}$.  Each AND-term in~\eqref{eq:anf}, where $a_I = 1$, is called a \emph{monomial} of $f$.
\end{definition}

\begin{example}
    \label{ex:anf}
    The ANF of $x_1 \lor x_2$ is $x_1 \oplus x_2 \oplus (x_1 \land x_2)$.  Here $a_\emptyset = 0$, but $a_{\{1\}} = a_{\{2\}} = a_{\{1,2\}} = 1$.
\end{example}

\noindent Note that every Boolean function has a unique ANF, since there are $2^n$ coefficients in~\eqref{eq:anf} and there exist $2^{2^n}$ Boolean functions over $n$ variables.

\begin{definition}[Algebraic degree]
    The \emph{algebraic degree} of a Boolean function $f$ is
    \begin{equation}
        \label{eq:degree}
        \deg(f) = \max \{\#I \mid a_I = 1\},
    \end{equation}
    where the coefficients $a_I$ are given by the ANF of $f$. In other words, the algebraic degree of $f$ is given by the number of variables in its largest monomial.
\end{definition}

\begin{definition}[Multiplicative complexity]
	Let $f$ denote a Boolean function. The multiplicative complexity of $f$, denoted by $c_\land(f)$, is the minimal number of AND gates in any logic network for $f$ over the gate set $\{\land,\oplus,\neg\}$, which consists of the 2-input AND and XOR gates and inverters.
\end{definition}

\noindent Note that OR gates can be considered AND gates in the context of the multiplicative complexity (see also Example~\ref{ex:anf}).

\begin{lemma}[Proposition 3.8, \cite{Schnorr88}]
\label{lem:degree-bound}
We have for all Boolean functions $f$, $c_\land(f) \ge \deg(f) - 1$.
\end{lemma}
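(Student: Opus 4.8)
The plan is to prove the inequality in the equivalent form $\deg(f) \le c_\land(f) + 1$, by establishing that \emph{every} Boolean function computed by a network with $k$ AND gates has algebraic degree at most $k+1$; specializing this to a minimal network for $f$ then yields $\deg(f) \le c_\land(f) + 1$, which rearranges to the claim. Note that the naive facts $\deg(g \oplus h) \le \max\{\deg g, \deg h\}$, $\deg(\bar g) = \deg(g \oplus 1) \le \deg(g)$, and $\deg(g \land h) \le \deg(g) + \deg(h)$ only give the much weaker bound $\deg(f) \le 2^k$, since they permit the degree to double at every AND gate. To obtain the sharp linear bound I would instead induct on $k$, exploiting that the first AND gate can be linearized.

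First I would record the auxiliary fact that algebraic degree is invariant under invertible affine changes of the input variables: if $\phi(x) = Mx \oplus v$ with $M$ invertible over $\mathbb{F}_2$, then substituting each affine coordinate $\phi_i$ into a monomial of degree $d$ produces a polynomial of degree at most $d$, so $\deg(f \circ \phi) \le \deg(f)$, and applying the same argument to $\phi^{-1}$ gives equality.

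The induction is on $k$. The base case $k = 0$ is immediate: a network with no AND gate computes an affine function, of degree at most $1 = 0 + 1$. For the inductive step, I consider the \emph{topologically first} AND gate; its two inputs are affine forms $L_1, L_2$ in the primary variables $x_1, \dots, x_n$, since they cannot depend on any other AND output. Discarding any trivial gate, I may assume $L_1$ is nonconstant, so some variable $x_{i_0}$ occurs in it. I then apply the invertible affine change of variables that replaces $x_{i_0}$ by $L_1$ while fixing the other variables; this preserves the degree and turns the first AND gate into $x_{i_0}' \land L_2'$ for the new variable $x_{i_0}'$. Restricting $x_{i_0}'$ to each constant removes the gate: setting $x_{i_0}' = 0$ makes it equal to $0$, and setting $x_{i_0}' = 1$ makes it equal to the affine form $L_2'$. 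Hence each restriction $f_0 = f|_{x_{i_0}' = 0}$ and $f_1 = f|_{x_{i_0}' = 1}$ is computed by a network with at most $k-1$ AND gates, so by the induction hypothesis $\deg(f_0), \deg(f_1) \le k$.

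Finally I would reassemble $f$ from its restrictions via $f = f_0 \oplus x_{i_0}'\,(f_0 \oplus f_1)$, which gives $\deg(f) \le \max\{\deg f_0,\, 1 + \deg(f_0 \oplus f_1)\} \le k + 1$ and closes the induction. The step I expect to be the crux is the setup of the inductive step: one must insist on the topologically first AND gate so that its inputs are affine in the primary variables, enabling a linear change of coordinates that isolates a single variable, and one must verify that fixing that variable genuinely deletes one AND gate from the network rather than merely simplifying it. With $\deg(f) \le k+1$ in hand for arbitrary networks, specializing to a minimal one delivers $c_\land(f) \ge \deg(f) - 1$.
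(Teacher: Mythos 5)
Your proof is correct, but note that there is nothing in the paper to compare it against: the paper does not prove this lemma, it imports it wholesale as Proposition~3.8 of Schnorr~\cite{Schnorr88}. Your self-contained argument is sound at every step: the topologically first AND gate does have affine inputs $L_1, L_2$ in the primary variables; algebraic degree is indeed invariant under invertible affine substitutions over $\mathbb{F}_2$; after the change of coordinates that sends $x_{i_0}$ to $L_1$, restricting the new variable to $0$ or $1$ turns the gate output into the constant $0$ or into the affine form $L_2'$, so each restriction is computed by a network with at most $k-1$ AND gates; and the Shannon recombination $f = f_0 \oplus x_{i_0}'(f_0 \oplus f_1)$ then gives $\deg(f) \le k+1$. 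For comparison, the more common route to this bound (essentially Schnorr's) avoids the change of coordinates: replace the output of the first AND gate by a fresh input variable $w$, so the rest of the circuit computes some $h(x,w)$ with at most $k-1$ AND gates; induction gives $\deg(h) \le k$, and since $h$ is multilinear in $w$, substituting $w = L_1 \land L_2$ (degree at most $2$) into the monomials containing $w$ raises their degree by at most one, yielding $\deg(f) \le k+1$ directly. Both arguments are inductions on the number of AND gates that neutralize the first gate; yours pays for the linearization with the affine-invariance lemma, while the substitution proof pays with the bookkeeping of an auxiliary variable. Either one is a legitimate replacement for the citation.
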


\begin{definition}[AND/OR chain]
    Given Boolean variables $x_1, x_2, \dots, x_k$, $k\geq n$, an \emph{AND/OR chain} is any formula
    \begin{equation}
        \label{eq:aoc}
        f = x_1 \circ_1 (x_2 \circ_2 (\cdots (x_{n-1} \circ_{n-1} x_n) \cdots )),
    \end{equation}
    where $\circ_i \in \{\land, \lor\}$.  We refer to $\ell(f) = n - 1$ as the \emph{length} of $f$.
\end{definition}

\begin{lemma}
    \label{lem:aoc-mc}
    Let $f$ be an AND/OR chain.  Then $\deg(f) = \ell(f) + 1$ and $c_\land(f) = \ell(f)$.
\end{lemma}

\begin{proof}
    We prove that $\deg(f) = \ell(f)+1$ by induction over the length~$\ell$ of an AND/OR chain. The statement holds trivially for~$\ell = 0$.  Assuming that the statement holds for $\ell$, consider a function $x \circ f$, where $\circ \in \{\land, \lor\}$ and $f$ is an AND/OR chain and $\ell(f) = \ell$.  Further, $x$ is not in the support of $f$.  Then, we have $\deg(x \circ f) = \deg((x \oplus f)[\circ = \lor] \oplus (x \land f)) = \deg(x \land f) = 1 + \deg(f) = \ell(f) + 2$.  Since the number of operators in $f$ is $\ell(f)$, using Lemma~\ref{lem:degree-bound}, we have $c_\land(f) = \ell(f)$.
\end{proof}

\section{Construction}\label{sec:construction}
In this section, we describe an algorithm to construct a logic network to evaluate $[a \le x < b]$.  A straightforward upper bound on the number of AND gates is the sum of the costs of both individual comparisons.  However, we will present a construction that incurs at most the cost of the more costly comparison, and we show that it is possible to save an additional AND gate if both comparisons have identical costs.

\subsection{Comparison}
As a starting point, we present a construction for comparing an integer to a constant, i.e., evaluating $[a\leq x]$ for a constant integer $a$ and an $n$-bit nonnegative integer $x$. We are not aware of any previous work that describes this construction.

\begin{lemma}\label{lem:compareodd}
	Let $a = (a_1 \dots a_{n-1}1)_2$ be an odd constant integer and let $x = (x_1 \dots x_n)_2$ be an arbitrary $n$-bit nonnegative integer. Then, the AND/OR chain $x_1 \circ_1 (x_2 \circ_2 (\cdots (x_{n-1} \circ_{n-1} x_n) \cdots ))$, with
	\[
	c_i = \begin{cases} \lor & \text{if $a_i = 0$,} \\ \land & \text{if $a_i = 1$,} \end{cases}
	\]
	evaluates $[a \le x]$.
\end{lemma}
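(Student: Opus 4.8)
The plan is to prove the lemma by backward induction on the bit position, showing that the AND/OR chain is exactly the recursive unrolling of a bitwise magnitude comparison. For $1 \le k \le n$, write $a^{(k)} = (a_k \dots a_n)_2$ and $x^{(k)} = (x_k \dots x_n)_2$ for the integers formed by the last $n-k+1$ bits, and set $g_k = [a^{(k)} \le x^{(k)}]$. The goal is $g_1 = [a \le x]$, and I claim that $g_1$ equals the stated chain once two facts are established: a base case at $k = n$, and a recurrence $g_k = x_k \circ_k g_{k+1}$ for $1 \le k \le n-1$ with $\circ_k$ chosen according to $a_k$ exactly as in the statement.

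First I would handle the base case. Since $a$ is odd we have $a_n = 1$, so $g_n = [1 \le x_n] = x_n$ because $x_n \in \{0,1\}$. This is the step where oddness is essential: it forces the innermost value $g_n$ to be the single literal $x_n$ rather than a constant, which is what lets the chain bottom out in the variable $x_n$ and match the form in~\eqref{eq:aoc}.

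Next I would establish the recurrence by a case split on $a_k$. The key quantitative observation is that the most significant bit dominates the tail: the value of bit position $k$ is $2^{n-k}$, which strictly exceeds the maximum value $2^{n-k} - 1$ attainable by all lower-order bits together. Hence, if $a_k = 1$, then $x_k = 0$ already forces $x^{(k)} < 2^{n-k} \le a^{(k)}$, so $[a^{(k)} \le x^{(k)}]$ fails, while $x_k = 1$ equalizes the leading bits and reduces the comparison to the tails, giving $g_k = x_k \land g_{k+1}$. Symmetrically, if $a_k = 0$, then $x_k = 1$ forces $x^{(k)} \ge 2^{n-k} > a^{(k)}$, making the comparison true outright, while $x_k = 0$ again reduces to the tails, giving $g_k = x_k \lor g_{k+1}$. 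These match $\circ_k = \land$ for $a_k = 1$ and $\circ_k = \lor$ for $a_k = 0$, as required.

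Finally I would conclude by unfolding the recurrence from $k = 1$ down to the base case, obtaining $g_1 = x_1 \circ_1 (x_2 \circ_2 (\cdots (x_{n-1} \circ_{n-1} g_n)\cdots)) = x_1 \circ_1 (x_2 \circ_2 (\cdots (x_{n-1} \circ_{n-1} x_n)\cdots))$, which is precisely the chain in the statement; since $g_1 = [a \le x]$ by definition, the chain evaluates the comparison. I expect the main obstacle to be stating the boundary arguments in the recurrence cleanly — in particular making the strict-versus-nonstrict inequalities airtight at the extremes, where all remaining bits of one operand are maximal and those of the other minimal — since this dominance bound is what the whole recursion rests on.
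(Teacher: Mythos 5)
Your proof is correct and takes essentially the same approach as the paper: both peel off the most significant bit of the remaining comparison, using the dominance of the leading bit to establish the identity $[a^{(k)} \le x^{(k)}] = x_k \circ_k [a^{(k+1)} \le x^{(k+1)}]$ (the paper phrases this as induction over the length $n$ of the odd constant, while you unfold a suffix recurrence within a fixed $n$, but the case analysis and base case relying on $a_n = 1$ are identical).
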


\begin{proof}
	We prove the statement using induction over $n$.  For $n = 1$, we have $a = 1$ and $x = x_1$, and $[1 \le x] = x_1$.  We assume that the statement holds for any odd constant integer of length $n$. Consider a constant $a = a_12^n + a'$ for some constant integer $a'$ of length $n$.  If $a_1 = 0$, then $[a' \le x] = x_1 \lor [a' \le (x_2 \dots x_{n+1})_2]$.  If $a_1 = 1$, then $[2^n + a' \le x] = x_1 \land [a' \le (x_2 \dots x_{n+1})_2]$.
\end{proof}

We provide pseudocode for this construction in Listing~\ref{lst:pseudo-cmp}, where we use the $\leftarrow$-operator to denote insertion of the expression on the right into the formula on the left at the unique position identified by the~$\;\cdot\;$~symbol. Invoking \texttt{comparison\_formula(a, range=\{1,n\})} produces the AND/OR chain from Lemma~\ref{lem:compareodd} for evaluating $[a\leq x]$, where $x$ is an $n$-bit integer and $a$ is an odd $n$-bit constant integer.

Note that comparing to an even number can be recast as a comparison against an odd number. This allows us to derive the multiplicative complexity of comparison using Lemma~\ref{lem:aoc-mc}, leading to the following corollary.

\begin{corollary}
	\label{lem:compare}
	Let $a = 2^jk > 0$ be a constant integer, where $k$ is odd.  Evaluating $[a \le x]$ is equivalent to evaluating $[a/2^j \le (x_1 \dots x_{n-j})_2]$, and therefore $c_\land([a \le x]) = n - j - 1$.
\end{corollary}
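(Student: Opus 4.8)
The plan is to prove Corollary~\ref{lem:compare} in two parts: first the reduction from even to odd comparison, and then the multiplicative complexity count.

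For the reduction, write $a = 2^j k$ with $k$ odd. The key observation is that $2^j k \le x$ holds if and only if $k \le \lfloor x / 2^j \rfloor$, because the lowest $j$ bits of $a$ are all zero. Concretely, $x = 2^j\lfloor x/2^j\rfloor + r$ with $0 \le r < 2^j$, so $2^j k \le x \iff 2^j k \le 2^j \lfloor x/2^j\rfloor + r$, and since $2^j(k - \lfloor x/2^j\rfloor)$ is a multiple of $2^j$ while $0 \le r < 2^j$, this is equivalent to $k \le \lfloor x/2^j\rfloor$. Writing $x = (x_1 \dots x_n)_2$, the quotient $\lfloor x/2^j\rfloor$ is exactly the integer $(x_1 \dots x_{n-j})_2$ formed by discarding the $j$ least significant bits. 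This establishes the stated equivalence $[a \le x] = [a/2^j \le (x_1 \dots x_{n-j})_2]$, where $a/2^j = k$ is odd.

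For the complexity count, I would simply compose the earlier results. The reduced problem is a comparison $[k \le y]$ where $k$ is an odd constant of bit-length (at most) $n-j$ and $y = (x_1 \dots x_{n-j})_2$ is an $(n-j)$-bit integer. By Lemma~\ref{lem:compareodd}, this function is computed by an AND/OR chain on the $n-j$ variables $x_1, \dots, x_{n-j}$, which by definition has length $\ell = (n-j) - 1$. Applying Lemma~\ref{lem:aoc-mc} then gives $c_\land([a \le x]) = \ell = n - j - 1$, as claimed.

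The only subtle point, and the step I would watch most carefully, is ensuring the lower bound is actually justified rather than merely the upper bound: Lemma~\ref{lem:compareodd} gives a chain realizing the function (an upper bound of $n-j-1$ AND gates), but the equality $c_\land = \ell$ in Lemma~\ref{lem:aoc-mc} already bundles in the matching degree-based lower bound from Lemma~\ref{lem:degree-bound}. So the clean way to present this is to invoke Lemma~\ref{lem:compareodd} to identify $[a \le x]$ with a specific AND/OR chain and then quote Lemma~\ref{lem:aoc-mc} verbatim; no separate degree computation is needed. One boundary case worth a sentence is confirming that the reduced constant $k$ indeed fits in $n-j$ bits given $a < 2^n$, and that the chain genuinely uses all $n-j$ variables so that its length is exactly $n-j-1$ and not smaller.
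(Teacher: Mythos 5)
Your proof is correct and takes essentially the same route as the paper, which justifies this corollary by exactly the two steps you use: recasting the comparison against $a = 2^jk$ as the odd comparison $[a/2^j \le (x_1 \dots x_{n-j})_2]$, and then combining Lemma~\ref{lem:compareodd} with Lemma~\ref{lem:aoc-mc}. Your explicit division-with-remainder argument and your observation that Lemma~\ref{lem:aoc-mc} already bundles the degree-based lower bound are precisely the details the paper leaves implicit.
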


\definecolor{keywordcolor}{rgb}{0,0.50,0}
\begin{listing}[h]
	\begin{minted}[mathescape,bgcolor=gray!7!white,escapeinside=||,tabsize=4,]{c++}
|\textcolor{keywordcolor}{Formula}| comparison_formula(a, range):
	|$F \leftarrow (\cdot)$| // formula being constructed
	for (k = range.low; k < range.high; ++k)
		|$\circ$| = a[k] ? |$\land$| : |$\lor$| // operator is chosen according to a[k]
		|$F\leftarrow x_k \circ (\cdot) $|
	return (|$F\leftarrow x_\text{range.high} $|)
	\end{minted}
	\caption{Pseudocode for generating a formula for comparison $[a\leq x]$ (assuming $a$ is odd) using our construction. The \texttt{range} argument can be used to generate subchains, which will be useful for our interval check construction. In the code, \texttt{a[k]} denotes the $k$-th bit of the $n$-bit integer $a$, with $k$ ranging from $1$ to $n$ and \texttt{a[1]} being the most-significant bit.}
	\label{lst:pseudo-cmp}
\end{listing}

\subsection{Interval checking}
Next, we construct a logic network for interval checking. To this end, note that $[a\leq x<b]$ is equivalent to
\begin{equation}\label{eq:cmpxor}
	[a \le x] \oplus [b \le x],
\end{equation}
since $a<b$.\footnote{This expression could also be evaluated on a quantum computer using Deutsch's algorithm~\cite{Deutsch85} with a single comparator.}
Starting from this expression, our algorithm for constructing the interval check proceeds by iteratively decomposing $[a \le x] \oplus [b \le x] = (x_1 \circ f_1) \oplus (x_1 \bullet f_2)$, where both $f_1$ and $f_2$ are either AND/OR chains or constants, and $\circ$ and $\bullet$ are either $\land$ or $\lor$.  The variables involved in $f_1$ and $f_2$ are either the same and appear in the same order, or the variables in one chain are a prefix of the variables in the other chain.  We show how to evaluate this expression for different choices of $\circ$ and $\bullet$, and that this leads to a formula involving either $f_1 \oplus f_2$ (allowing us to recurse), or the if-then-else ($\ite$) operation on two AND/OR chains.

Specifically, if $f_1$ or $f_2$ is a constant, then
\begin{equation}
    \label{eq:constr-t}
    x \oplus (x \land f) = x \land \bar f
    \quad\text{or}\quad
    x \oplus (x \lor f) = \bar x \land f,
\end{equation}
and the iteration stops.  Otherwise, one of the following three cases applies:
\begin{subequations}
    \begin{align}
        \label{eq:constr-aa}
        (x \land f_1) \oplus (x \land f_2) &= x \land (f_1 \oplus f_2) \\
        \label{eq:constr-oo}
        (x \lor f_1) \oplus (x \lor f_2) &= \bar x \land (f_1 \oplus f_2) \\
        \label{eq:constr-oa}
        (x \lor f_1) \oplus (x \land f_2) &= \ite(x, \bar f_2, f_1)
    \end{align}
\end{subequations}

\begin{proof}
    Equations~\eqref{eq:constr-t} and~\eqref{eq:constr-aa} follow from straightforward Boolean identities.  For~\eqref{eq:constr-oo}, note that $(x \lor f_1) \oplus (x \lor f_2) = (\bar x \land \bar f_1) \oplus (\bar x \land \bar f_2) = \bar x \land (f_1 \oplus f_2)$, by applying De Morgan's law and using the fact that $\bar x \oplus \bar x = x \oplus x$.  By expanding the first term in~\eqref{eq:constr-oa} into an ANF, one obtains $f_1 \oplus (x \land f_1) \oplus x \oplus (x \land f_2)$.  Then the first two and the last two terms can be merged into~$(\bar x \land f_1) \oplus (x \land \bar f_2) = \ite(x, \bar f_2, f_1)$.
\end{proof}

\smallskip\noindent Next, we show a special construction for $\ite(x, \bar f_2, f_1)$ that exploits the fact that $f_1$ and $f_2$ are both AND/OR chains.  Their formulas only differ in their lengths or in what operators are used.  The idea is to propagate the inversion of $f_2$ into the formula (using De Morgan's laws) in order to make the operators match those of $f_1$.  Then, the inversions and potential chain suffixes (for chains with unequal lengths) can be implemented in the same formula conditional on $x$.

Before proving the general case, we discuss the following examples to provide some intuition.  In the first example, both chains have the same lengths and none of the operators are equal.
\begin{example}
Let $f_1 = x_2 \lor (x_3 \land x_4)$ and $f_2 = x_2 \land (x_3 \lor x_4)$.  Then
\[
    \begin{aligned}
        \bar f_2 &= \overline{x_2 \land (x_3 \lor x_4)} \\
                 &= \bar x_2 \lor \overline{(x_3 \lor x_4)} \\
                 &= \bar x_2 \lor (\bar x_3 \land \bar x_4),
    \end{aligned}
\]
and therefore
\[
    \ite(x_1, \bar f_2, f_1) = (x_1 \oplus x_2) \lor ((x_1 \oplus x_3) \land (x_1 \oplus x_4)).
\]
Note how this formula evaluates to $f_1$, if $x_1 = 0$, and to $\bar f_2$, if $x_1 = 1$.
\end{example}

\smallskip\noindent In the second example, the chain lengths are still equal, but one operator is the same: The $\lor$ operator links $x_2$ to the rest of the chain in both $f_1$ and $f_2$.
\begin{example}
Let $f_1 = x_2 \lor (x_3 \land x_4)$ and $f_2 = x_2 \lor (x_3 \lor x_4)$. Then
\[
    \begin{aligned}
        \bar f_2 &= \overline{x_2 \lor (x_3 \lor x_4)} \\
                 &= \overline{x_2 \lor \overline{\overline{(x_3 \lor x_4)}}} \\
                 &= \overline{x_2 \lor \overline{(\bar x_3 \land \bar x_4)}},
    \end{aligned}
\]
and therefore
\[
    \ite(x_1, \bar f_2, f_1) = x_1 \oplus (x_2 \lor (x_1 \oplus ((x_1 \oplus x_3) \land (x_1 \oplus x_4)))).
\]
Note how the inverter is not propagated when the operators are the same, but a double negation is introduced to further propagate the inverter to the remaining part of the chain.
\end{example}

\smallskip\noindent In the final example, we consider the case in which one chain is longer than the other.
\begin{example}
Let $f_1 = x_2 \land (x_3 \lor x_4)$ and $f_2 = x_2 \lor (x_3 \land (x_4 \lor x_5))$. Then
\[
    \begin{aligned}
        \bar f_2 &= \overline{x_2 \lor (x_3 \land (x_4 \lor x_5))} \\
                 &= \bar x_2 \land \overline{(x_3 \land (x_4 \lor x_5))} \\
                 &= \bar x_2 \land (\bar x_3 \lor \overline{(x_4 \lor x_5)}),
    \end{aligned}
\]
and therefore
\[
    \ite(x_1, \bar f_2, f_1) = (x_1 \oplus x_2) \land ((x_1 \oplus x_3) \lor (x_1 \oplus (x_4 \lor (x_1 \land x_5)))).
\]
Note how $x_1$ is not only used to invert subterms of the formula, but also to conditionally include $x_5$ to represent~$\bar f_2$.
\end{example}

\medskip\noindent We can now enumerate all cases for $f_1$ and $f_2$ in $\ite(x, \bar f_2, f_1)$.  In the following, $x_i \neq x_j$, and neither of the two variables occurs in $f_1$ or $f_2$.  The terminal cases apply when $\ell(f_1) = 0$ or $\ell(f_2) = 0$:
\begin{subequations}
\begin{align}
    \label{eq:dm-t-eq}
    \ite(x_i, \bar x_j, x_j) &= x_i \oplus x_j \\
    \label{eq:dm-t-1a}
    \ite(x_i, \bar x_j, x_j \land f) &= (x_i \oplus x_j) \land (x_i \lor f) \\
    \label{eq:dm-t-1o}
    \ite(x_i, \bar x_j, x_j \lor f) &= (x_i \oplus x_j) \lor (\bar x_i \land f) \\
    \label{eq:dm-t-2a}
    \ite(x_i, \overline{x_j \land f}, x_j) &= (x_i \oplus x_j) \lor (x_i \land \bar f) \\
    \label{eq:dm-t-2o}
    \ite(x_i, \overline{x_j \lor f}, x_j) &= (x_i \oplus x_j) \land (\bar x_i \lor \bar f)
\end{align}
\end{subequations}
In addition to these 5 terminal cases, there are 4 non-terminal cases:
\begin{subequations}
\begin{align}
    \label{eq:dm-nt-ao}
    \ite(x_i, \overline{x_j \lor f_2}, x_j \land f_1) &= (x_i \oplus x_j) \land \ite(x_i, \bar f_2, f_1) \\
    \label{eq:dm-nt-oa}
    \ite(x_i, \overline{x_j \land f_2}, x_j \lor f_1) &= (x_i \oplus x_j) \lor \ite(x_i, \bar f_2, f_1) \\
    \label{eq:dm-nt-aa}
    \ite(x_i, \overline{x_j \land f_2}, x_j \land f_1) &= x_i \oplus (x_j \land (x_i \oplus \ite(x_i, \bar f_2, f_1))) \\
    \label{eq:dm-nt-oo}
    \ite(x_i, \overline{x_j \lor f_2}, x_j \lor f_1) &= x_i \oplus (x_j \lor (x_i \oplus \ite(x_i, \bar f_2, f_1))
\end{align}
\end{subequations}
These identities are readily verified by expressing both sides of the equation as an ANF and using identities such as $x \oplus x = 0$ and $\bar x \land y = y \oplus (x \land y)$.

We provide pseudocode for our construction in Listing~\ref{lst:pseudo-ic}, where we make use of the shorthand $x^b$ for $x,b\in\{0,1\}$ to mean
\[
	x^b = \begin{cases}
				x & \text{if }b=1,\\
				\overline x & \text{if }b=0,
			   \end{cases}
\]
and for a binary operator $\circ\in\{\land,\lor\}$, we denote by $\overline{\circ}$ its dual operator, i.e., if $\circ=\land$, then $\overline{\circ}=\lor$ and vice-versa.

\begin{listing}[H]
\begin{minted}[mathescape,bgcolor=gray!7!white,escapeinside=||,tabsize=4,]{c++}
|\textcolor{keywordcolor}{Formula}| interval_check_formula(n, a, b, j_a, j_b):
	cutoff = n - min(j_a, j_b)
	|$F$| = |$(\cdot)$| // formula being constructed
	// first, remove identical prefix: Eqs. $\eqref{eq:constr-aa},\;\eqref{eq:constr-oo}$
	for (i = 1; a[i] == b[i] and i < n - max(j_a, j_b); ++i)
		|$F \leftarrow$| |$x_i^{a[i]} \land (\cdot)$|
	
	// iteration stops if $f_1$ or $f_2$ is a constant: Eq. $\eqref{eq:constr-t}$
	if i == n - j_a
		|$f$| = comparison_formula(b, range={i+1,cutoff})
		return (|$F \leftarrow$| |$x_i \land \overline{f}$|)
	if i == n - j_b
		|$f$| = comparison_formula(a, range={i+1,cutoff})
		return (|$F \leftarrow$| |$\overline{x_i} \land f$|)
	
	// handle remaining and/or-chain sections with ite()-recursion: Eqs. $\eqref{eq:dm-nt-ao}-\eqref{eq:dm-nt-oo}$
	for (j = i+1; j < n - max(j_a, j_b); ++j)
		|$\circ_j$| = a[j] ? |$\land$| : |$\lor$| // and/or is chosen according to a[j]
		if a[j] == b[j] // same operators in and/or chains
			|$F \leftarrow x_i\oplus (x_j \circ_j (x_i\oplus (\cdot)))$|
		else // different operators in and/or chains
			|$F \leftarrow (x_i\oplus x_j) \circ_j (\cdot)$|
	
	// handle terminal cases for ite(): Eqs. $\eqref{eq:dm-t-eq}-\eqref{eq:dm-t-2o}$
	if j_a != j_b // unequal lengths
		negop = j_a > j_b ? !b[j] : a[j]
		|$\circ$| = negop ? |$\land$| : |$\lor$|
		num = j_a > j_b ? |\textcolor{black}{b}| : a
		// get the postfix of the longer chain
		|$f$| = comparison_formula(num, range={j+1,cutoff})
		// and merge with the formula
		if j_a > j_b
			negop = !negop
		return (|$F \leftarrow (x_i\oplus x_j) \circ x_i^\text{negop} \; \overline{\circ}\; f^{j_a<j_b}$|)
	else // Eq. $\eqref{eq:dm-t-eq}$
		return (|$F \leftarrow x_i\oplus x_j$|)
\end{minted}
\caption{Pseudocode for generating a formula for the interval check $[a\leq x < b]$ based on the construction detailed in Section~\ref{sec:construction}. As in Listing~\ref{lst:pseudo-cmp}, \texttt{a[k]} denotes the $k$-th bit of the $n$-bit integer $a$ and \texttt{a[1]} is the most-significant bit. \texttt{j\_a} and \texttt{j\_b} correspond to $j_a$ and $j_b$, respectively.}
\label{lst:pseudo-ic}
\end{listing}

\section{Analysis}\label{sec:analysis}
In this section, we determine the multiplicative complexity of the interval check $[a \le x < b]$.  To this end, we first compute an upper bound on the multiplicative complexity by counting the number of AND and OR gates that appear in our construction from the previous section.  Then, we determine a lower bound on the multiplicative complexity by deriving the algebraic degree of our construction.  We will conclude that the lower bound matches the upper bound, allowing us to prove Theorem~\ref{thm:main}.

\subsection{Upper bound}
In the following, let $u(f)$ be the number of AND/OR gates that are applied when evaluating $f$ according to the construction discussed in the previous section.

\begin{lemma}
    \label{lem:prop-ub}
    Let $f_1$ and $f_2$ be two AND/OR chains that do not contain $x$. Then
    \[
        u(\ite(x, \bar f_2, f_1)) =
        \begin{cases}
            \ell(f_1) & \text{if $\ell(f_1) = \ell(f_2)$,} \\
            \max\{\ell(f_1), \ell(f_2)\} + 1 & \text{otherwise.}
        \end{cases}
    \]
\end{lemma}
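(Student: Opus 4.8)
The plan is to prove both cases at once by induction on $m = \min\{\ell(f_1),\ell(f_2)\}$, exploiting the recursive structure of the identities \eqref{eq:dm-t-eq}--\eqref{eq:dm-nt-oo} that define the construction. First I would record the two facts that make the bookkeeping go through: evaluating an AND/OR chain of length $\ell$ uses exactly $\ell$ AND/OR gates (one per operator), so $u$ restricted to AND/OR chains is just $\ell$; and the transformations used in the construction --- pushing inverters through a formula by De Morgan's laws and inserting XOR gates with $x_i$ --- never create or destroy an $\land$ or $\lor$ gate, so only the AND/OR symbols explicitly appearing on the right-hand sides of the identities contribute to $u$.

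The base case is $m = 0$, i.e.\ at least one chain is a single variable, which is exactly when the terminal identities \eqref{eq:dm-t-eq}--\eqref{eq:dm-t-2o} apply. If both chains have length $0$, identity \eqref{eq:dm-t-eq} yields $x_i \oplus x_j$ with no AND/OR gates, so $u = 0 = \ell(f_1)$, matching the equal-length case. If exactly one chain is a single variable while the other has length $\geq 1$, then one of \eqref{eq:dm-t-1a}, \eqref{eq:dm-t-1o} applies (symmetrically \eqref{eq:dm-t-2a}, \eqref{eq:dm-t-2o} when the short chain is $f_1$); each right-hand side contains exactly two AND/OR symbols together with a residual chain $f$ of length $\max\{\ell(f_1),\ell(f_2)\} - 1$, giving $u = 2 + (\max\{\ell(f_1),\ell(f_2)\} - 1) = \max\{\ell(f_1),\ell(f_2)\} + 1$, matching the unequal-length case.

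For the inductive step, $m \geq 1$ means both chains begin with a common variable $x_j$ and have length $\geq 1$, so exactly one of the non-terminal identities \eqref{eq:dm-nt-ao}--\eqref{eq:dm-nt-oo} applies according to the two leading operators. Each such right-hand side is a single $\land$ or $\lor$ symbol (plus cost-free XORs) applied to a subexpression $\ite(x_i, \bar g_2, g_1)$, where $g_1, g_2$ are the tails of $f_1, f_2$ with $\ell(g_k) = \ell(f_k) - 1$. Hence $u(\ite(x_i, \bar f_2, f_1)) = 1 + u(\ite(x_i, \bar g_2, g_1))$, and I would apply the induction hypothesis to the shortened pair. The crucial observation is that passing from $(f_1,f_2)$ to $(g_1,g_2)$ decrements both lengths by one, so it preserves whether the lengths are equal and decreases $\max$ by one: in the equal case $u = 1 + (\ell(f_1) - 1) = \ell(f_1)$, and in the unequal case $u = 1 + \bigl((\max\{\ell(f_1),\ell(f_2)\}-1) + 1\bigr) = \max\{\ell(f_1),\ell(f_2)\} + 1$.

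The main obstacle I anticipate is not the arithmetic but the bookkeeping: verifying that the case split is exhaustive and that each recursive identity genuinely decrements both chain lengths by one, so that the equal/unequal dichotomy is an invariant of the recursion and the single extra $+1$ in the unequal case is contributed exactly once --- precisely by the terminal step \eqref{eq:dm-t-1a}--\eqref{eq:dm-t-2o}, whose two AND/OR symbols exceed by one the single symbol contributed at each non-terminal level. I would confirm this by checking the AND/OR counts on the right-hand sides of all nine identities against the claimed formula before assembling the induction.
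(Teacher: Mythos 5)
Your proof is correct and takes essentially the same route as the paper's: induction on the minimum chain length, with the terminal identities \eqref{eq:dm-t-eq}--\eqref{eq:dm-t-2o} serving as the base case (contributing the extra AND/OR gate when lengths differ) and the non-terminal identities \eqref{eq:dm-nt-ao}--\eqref{eq:dm-nt-oo} serving as the inductive step, each adding exactly one AND/OR gate while shortening both chains by one. The only cosmetic difference is that the paper assumes $\ell(f_2) \ge \ell(f_1)$ w.l.o.g.\ and inducts on $\ell(f_1)$, whereas you induct on $\min\{\ell(f_1),\ell(f_2)\}$ directly and spell out the symmetric terminal cases.
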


\begin{proof}
    W.l.o.g.~we assume that $\ell(f_2) \ge \ell(f_1)$ and prove the statement by induction on $\ell(f_1)$.  In the base case, $\ell(f_1) = 0$, the statement follows from~\eqref{eq:dm-t-eq} when $\ell(f_2) = 0$.  Otherwise, either~\eqref{eq:dm-t-2a} or~\eqref{eq:dm-t-2o} applies, resulting in $2 + \ell(f) = 2 + \ell(f_2) - 1 = \ell(f_2) + 1$ AND or OR gates.  For the induction step, assume that the statement holds for $\ell(f_1) = \ell$ and let $f_1'$ be an AND/OR chain with $\ell(f_1') = \ell + 1$.  Then, one of the cases in~\eqref{eq:dm-nt-ao}--\eqref{eq:dm-nt-oo} must apply, which adds one AND or OR gate in each case and reduces the length of the formulas passed to `$\ite$' by one.
\end{proof}

\begin{lemma}
\label{lem:constr-ub}
    Let $f_1$ and $f_2$ be two distinct AND/OR chains. Then,
    \[
        u(f_1 \oplus f_2) =
        \begin{cases}
            \ell(f_1) - 1 & \text{if $\ell(f_1) = \ell(f_2)$,} \\
            \max\{\ell(f_1), \ell(f_2)\} & \text{otherwise.}
        \end{cases}
    \]
\end{lemma}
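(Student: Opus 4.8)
The plan is to prove the claim by induction on $\min\{\ell(f_1),\ell(f_2)\}$, following the recursive decomposition that the construction performs on $f_1 \oplus f_2$ and invoking Lemma~\ref{lem:prop-ub} to close the branches that reduce to an $\ite$ of two chains. Throughout I would use two structural facts that hold in our setting: that $f_1$ and $f_2$ share their variables in the same order, so distinctness forces them to differ in some operator (or in length); and that negating an AND/OR chain via De Morgan's laws yields another AND/OR chain of the same length, hence with the same number of AND/OR gates (inverters being free).

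For the base case $\min\{\ell(f_1),\ell(f_2)\}=0$, assume without loss of generality $\ell(f_1)=0$, so $f_1=x$ is a single variable and $f_2 = x \circ g$ with $\ell(g)=\ell(f_2)-1$; the subcase $\ell(f_2)=0$ would make the two shared-variable chains equal and is excluded. Then $f_1\oplus f_2 = x\oplus(x\circ g)$ is directly an instance of~\eqref{eq:constr-t}, producing $x\land \bar g$ or $\bar x\land g$. Counting one gate for the outer conjunction plus the $\ell(g)$ gates of the chain gives $u(f_1\oplus f_2)=1+(\ell(f_2)-1)=\ell(f_2)=\max\{\ell(f_1),\ell(f_2)\}$, which is the ``otherwise'' branch since $\ell(f_1)\neq\ell(f_2)$.

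For the inductive step, let $\ell(f_1),\ell(f_2)\ge 1$ with leading operators $\circ_1,\bullet_1$. If $\circ_1\neq\bullet_1$, then after the commutativity-based normalization so that the $\lor$-chain plays the role of the first argument, identity~\eqref{eq:constr-oa} rewrites $f_1\oplus f_2$ as $\ite(x_1,\bar f_2',f_1')$ on the tails $f_1',f_2'$ with $\ell(f_i')=\ell(f_i)-1$. Applying Lemma~\ref{lem:prop-ub} together with $\ell(f_1')=\ell(f_2')\Leftrightarrow\ell(f_1)=\ell(f_2)$ and $\max\{\ell(f_1'),\ell(f_2')\}=\max\{\ell(f_1),\ell(f_2)\}-1$ closes this branch with no recursion and yields exactly the two claimed values. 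If instead $\circ_1=\bullet_1$, the construction peels $x_1$ via~\eqref{eq:constr-aa} or~\eqref{eq:constr-oo}, spending one gate, so that $u(f_1\oplus f_2)=1+u(f_1'\oplus f_2')$ on tails that remain distinct AND/OR chains with $\min$ reduced by one; the induction hypothesis, split on whether the tail lengths agree, then gives $\ell(f_1)-1$ in the equal case and $\max\{\ell(f_1),\ell(f_2)\}$ in the unequal case.

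The delicate point — and the only place the bookkeeping can slip — is the boundary between the peeling phase and the terminating phase. I would verify explicitly that in the equal-length case the recursion always reaches a position with differing operators, so it closes through~\eqref{eq:constr-oa} and Lemma~\ref{lem:prop-ub} and never through the constant rule~\eqref{eq:constr-t} (otherwise $f_1=f_2$), whereas in the unequal-length case it may instead exhaust the shorter chain and close through~\eqref{eq:constr-t}. Both dichotomies follow from distinctness and the shared-variable ordering, and stating them ensures the $-1$ in the equal-length formula is accounted for exactly once rather than double-counted against the base case.
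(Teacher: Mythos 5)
Your proof is correct and takes essentially the same route as the paper's: both strip the shared operator prefix via~\eqref{eq:constr-aa} and~\eqref{eq:constr-oo} (you phrase this as an induction on $\min\{\ell(f_1),\ell(f_2)\}$, the paper applies them $k$ times in one step), and both close either through~\eqref{eq:constr-t} or through~\eqref{eq:constr-oa} combined with Lemma~\ref{lem:prop-ub}. Your explicit verification that the equal-length case can never terminate via~\eqref{eq:constr-t} is a detail the paper leaves implicit (it only notes $f_1'\neq f_2'$), but the underlying argument is identical.
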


\begin{proof}
    We assume w.l.o.g.~that $\ell(f_2) \ge \ell(f_1)$.  If the first~$k$ operators (indexed from $1$~to~$k$ in~\eqref{eq:aoc}) in the AND/OR chains of $f_1$ and $f_2$ are the same, we can apply equations~\eqref{eq:constr-aa} and~\eqref{eq:constr-oo} $k$ times, resulting in $u(f_1 \oplus f_2) = k + u(f_1' \oplus f_2')$, where $f_1'$ and $f_2'$ are obtained by removing the first $k$ operators of $f_1$ and $f_2$, respectively, with $\ell(f_1') = \ell(f_1) - k$, and $\ell(f_2') = \ell(f_2) - k$.  Note that $f_1' \neq f_2'$, since $f_1 \neq f_2$.  Therefore, it is sufficient to consider the case in which the top most operators of $f_1'$ and $f_2'$ differ or the case in which $\ell(f_1') = 0$.  In the first case, $f_1'\oplus f_2'$ can be written as an if-then-else construct acting on chains that are shortened by 1 operand/operator, see \eqref{eq:constr-oa}, and the statement follows from Lemma~\ref{lem:prop-ub}.  In the second case, the statement follows from~\eqref{eq:constr-t}.
\end{proof}

\subsection{Lower bound}

We prove a lower bound on the multiplicative complexity by computing the algebraic degree of $f_1\oplus f_2$, where $f_1$ and $f_2$ are AND/OR chains. To do so, we first prove the following fact for the case where $\ell(f_1)=\ell(f_2)$.

\begin{lemma}
\label{lem:anf-terms}
Let $f_1$ and $f_2$ be two distinct AND/OR chains of identical length $n-1$. Then, the largest monomial of $f_1$ is of length $n$ and equal to that of $f_2$, and there exists a monomial of length $n-1$ that exists in only one of the two chains.
\end{lemma}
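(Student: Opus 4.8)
The plan is to handle the two claims separately. The first claim follows directly from Lemma~\ref{lem:aoc-mc}: since $\ell(f_1) = \ell(f_2) = n-1$, both chains have algebraic degree exactly $n$. Over the $n$ variables $x_1, \dots, x_n$ the only monomial of degree $n$ is the full product $x_1 \land \cdots \land x_n$, so this is the unique largest monomial of each chain, and it coincides for $f_1$ and $f_2$.

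For the second claim, the idea is to compute the coefficients of the degree-$(n-1)$ monomials in the ANF of an arbitrary chain and read off the operators directly. I would write $f = x_1 \circ_1 g$, where $g$ is the subchain on $x_2, \dots, x_n$, and let $M_k = \bigwedge_{i \neq k} x_i$ denote the degree-$(n-1)$ monomial omitting $x_k$. Using that the ANF of $x_1 \land g$ equals $x_1$ times the ANF of $g$, while $x_1 \lor g = x_1 \oplus g \oplus (x_1 \land g)$, I would track which terms can contribute a monomial of degree $n-1$. Two observations drive the argument: (i) the coefficient of $M_1$ (the monomial omitting the top variable $x_1$) is $1$ precisely when $\circ_1 = \lor$, since only the standalone $g$ summand produces a degree-$(n-1)$ monomial free of $x_1$, and its top monomial $x_2 \land \cdots \land x_n$ has coefficient $1$; and (ii) for every $k \ge 2$ the coefficient of $M_k$ in $f$ equals the coefficient of the corresponding degree-$(n-2)$ monomial in $g$, independently of $\circ_1$, because such monomials contain $x_1$ and hence arise only from the $x_1 \land g$ summand.

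Iterating observation~(ii) down the chain, I would prove by induction on $n$ that, for every $k \in \{1, \dots, n-1\}$, the coefficient of $M_k$ in $f$ is $1$ if and only if $\circ_k = \lor$; the base case is the two-variable chain $x_{n-1} \circ_{n-1} x_n$, whose ANF is $x_{n-1} \land x_n$ or $x_{n-1} \oplus x_n \oplus (x_{n-1} \land x_n)$ according to $\circ_{n-1}$. With this coefficient formula established, the claim is immediate: since $f_1$ and $f_2$ are distinct chains of equal length, they differ in some operator at a position $j \in \{1, \dots, n-1\}$, and then $M_j$ has coefficient $1$ in exactly the one of $f_1, f_2$ whose $j$-th operator is $\lor$, so $M_j$ occurs in only one of the two chains.

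I expect the main obstacle to be the bookkeeping behind observations~(i) and~(ii): one must argue carefully that no summand other than $g$ contributes a degree-$(n-1)$ monomial lacking $x_1$, and that the $x_1 \land g$ summand contributes to each $M_k$ with $k \ge 2$ exactly the degree-$(n-2)$ coefficient of $g$. Once this clean localization of the top operator into the coefficient of $M_1$ is in place, the induction propagates it to all positions and the conclusion follows routinely.
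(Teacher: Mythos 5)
Your proof is correct, and it is worth contrasting with the paper's, since the two arguments are organized quite differently even though both ultimately detect an $\lor$-operator via a monomial that omits one variable. The paper picks the \emph{largest} position $k$ at which the operator sequences of $f_1$ and $f_2$ differ, precisely so that the two chains share a common suffix $g$; it then uses $x_k \lor g = x_k \oplus g \oplus (x_k \land g)$ to conclude that the monomial $x_{k+1}\land\cdots\land x_n$ occurs in $x_k\lor g$ but in no monomial of $x_k\land g$, and finally asserts (leaving the prefix propagation implicit) that $x_1\land\cdots\land x_{k-1}\land x_{k+1}\land\cdots\land x_n$ lies in exactly one of the two ANFs. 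You instead prove a stronger, standalone fact about a \emph{single} chain: for every position $k$, the coefficient of the degree-$(n-1)$ monomial $M_k$ omitting $x_k$ is $1$ if and only if $\circ_k = \lor$, established by an induction whose two cases (your observations~(i) and~(ii)) are exactly the bookkeeping the paper compresses into the word ``consequently.'' Your characterization buys several things: any differing position works (no need for the largest one, and no need for a shared suffix); the prefix-propagation step is made explicit rather than implicit; and as a byproduct the operator sequence is recoverable from the ANF, so distinct chains of equal length over the same ordered variables necessarily compute distinct functions---a fact both proofs otherwise take as the meaning of ``distinct.'' The cost is a slightly longer induction, whereas the paper's choice of the largest differing position lets it argue at one position only. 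Both arguments rest on the same two pillars: the identity $x \lor g = x \oplus g \oplus (x\land g)$ and the uniqueness of the top monomial from Lemma~\ref{lem:aoc-mc}.
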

\begin{proof}
	Let $f$ denote an AND/OR chain with inputs $x_2,...,x_n$. Noting that $x_1\lor f = x_1\oplus f \oplus (x_1 \land f)$, the largest monomial of $f_1$ and $f_2$ is $(x_1 \land x_2 \land \cdots \land x_n)$.
	To prove the second statement, note that since $f_1\neq f_2$, there exists a largest position $k \leq n-1$ at which the operators in $f_1$ and $f_2$ differ. W.l.o.g., $f_1$ has an $\lor$-operator and $f_2$ has an $\land$-operator in the $k$-th position. We write $f_1=f_1'\circ (x_k\lor g(x_{k+1},\dots,x_n))$ and $f_2=f_2'\bullet (x_k\land g(x_{k+1},\dots,x_n))$, where $f_1'$ and $f_2'$ are prefixes involving variables $x_1,\dots,x_{k-1}$ and $g$ is an AND/OR chain with the largest monomial $(x_{k+1} \land \cdots \land x_n)$. In turn, this is one of the second largest monomials in the ANF of $x_k\lor g = x_k\oplus g \oplus x_k \land g$, but not in the ANF of $x_k\land g$, where all monomials feature the variable $x_k$. Consequently, note that $f_1$ features the monomial $x_1\land\cdots \land x_{k-1}\land x_{k+1}\land\cdots \land x_n$, which is not present in the ANF of $f_2$.
\end{proof}

\noindent Using Lemma~\ref{lem:anf-terms}, we can compute the algebraic degree of $f_1\oplus f_2$ as follows.

\begin{lemma}
\label{lem:constr-lb}
Let $f_1$ and $f_2$ be two distinct AND/OR chains. Then,
\[
    \deg(f_1 \oplus f_2) =
    \begin{cases}
        \ell(f_1) & \text{if $\ell(f_1) = \ell(f_2)$,} \\
        \max\{\ell(f_1), \ell(f_2)\} + 1 & \text{otherwise.}
    \end{cases}
\]
\end{lemma}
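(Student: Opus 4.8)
The plan is to combine the structural facts from Lemma~\ref{lem:aoc-mc} and Lemma~\ref{lem:anf-terms} with the elementary observation that XOR cannot create new monomials: since the ANF of $f_1 \oplus f_2$ is obtained by adding the coefficient vectors of $f_1$ and $f_2$ modulo $2$, we always have $\deg(f_1 \oplus f_2) \le \max\{\deg(f_1), \deg(f_2)\}$, and a monomial survives in $f_1 \oplus f_2$ exactly when it occurs in precisely one of the two chains. Recalling that an AND/OR chain of length $n-1$ is a formula over the variables $x_1, \dots, x_n$ and, by Lemma~\ref{lem:aoc-mc}, has degree $\ell + 1$ with unique top monomial $x_1 \land \cdots \land x_n$, I would split into the two cases of the statement.

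For the equal-length case $\ell(f_1) = \ell(f_2) = n-1$, I would argue that the degree drops by exactly one. By Lemma~\ref{lem:anf-terms}, the unique degree-$n$ monomial $x_1 \land \cdots \land x_n$ is present in both chains, so it cancels in $f_1 \oplus f_2$; as it is the only monomial of degree $n$ over $n$ variables, this already yields $\deg(f_1 \oplus f_2) \le n - 1$. For the matching lower bound I would invoke the second assertion of Lemma~\ref{lem:anf-terms}: there is a monomial of length $n-1$ appearing in exactly one of the two chains, which therefore survives the XOR, so $\deg(f_1 \oplus f_2) \ge n - 1 = \ell(f_1)$.

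For the unequal-length case, assume without loss of generality $\ell(f_2) > \ell(f_1)$, so that $f_2$ is a chain over $x_1, \dots, x_m$ with $m = \ell(f_2) + 1$ while $f_1$ involves only the proper prefix $x_1, \dots, x_k$ with $k = \ell(f_1) + 1 < m$. The upper bound $\deg(f_1 \oplus f_2) \le \deg(f_2) = \ell(f_2) + 1$ is immediate from the general XOR bound above. For the lower bound, note that the top monomial $x_1 \land \cdots \land x_m$ of $f_2$ contains the variable $x_m$, which does not occur in $f_1$ at all; hence it cannot be cancelled and survives, giving $\deg(f_1 \oplus f_2) \ge m = \max\{\ell(f_1), \ell(f_2)\} + 1$. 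Combining the two bounds yields the claimed equality.

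I expect no serious obstacle, since the real work---identifying which high-degree monomials coincide and which differ between two chains---has already been carried out in Lemma~\ref{lem:anf-terms}. The only point requiring a little care is the equal-length case, where both the cancellation of the degree-$n$ term (for the upper bound) and the persistence of a degree-$(n-1)$ term (for the lower bound) must be read off from Lemma~\ref{lem:anf-terms} at once; everything else reduces to the uniqueness of the top monomial and the monotonicity of $\deg$ under XOR.
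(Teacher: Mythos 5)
Your proposal is correct and follows essentially the same route as the paper: a case split on whether the lengths are equal, with Lemma~\ref{lem:anf-terms} supplying both the cancellation of the common largest monomial and the surviving monomial of length $\ell(f_1)$ in the equal-length case, and the survival of the longer chain's unique top monomial handling the unequal-length case. The only cosmetic difference is that for the unequal-length case you justify non-cancellation by observing that the top monomial contains a variable absent from the shorter chain, whereas the paper observes that all monomials of the shorter chain have strictly smaller degree---two equivalent one-line arguments.
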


\begin{proof}
In the case $\ell(f_1) \neq \ell(f_2)$, assume w.l.o.g.~that $\ell(f_1) > \ell(f_2)$.  The degree of $f_1$ is $\ell(f_1) + 1$ (see Lemma~\ref{lem:aoc-mc}) since its single largest monomial contains all variables in the support of $f_1$. All monomials of $f_2$ are smaller, and therefore $\deg(f_1 \oplus f_2) = \deg(f_1) = \ell(f_1) + 1$.

If $\ell(f_1) = \ell(f_2)$, then Lemma~\ref{lem:anf-terms} implies that the largest monomials of $f_1$ and $f_2$ are identical and thus no longer present in the ANF of $f_1\oplus f_2$, and that there exists a monomial of length $\ell(f_1)$ that is present in only one of the two ANFs and thus also in the ANF of $f_1\oplus f_2$. Therefore, $\deg(f_1\oplus f_2) = \deg(f_1)-1 = l(f_1)$.
\end{proof}

\subsection{Proof of Theorem~\ref{thm:main}}

We are now in a position to prove the main theorem.

\begin{proof}
We have
\[
\begin{aligned}
    f=[a \le x < b] &= [a \le x] \oplus [b \le x] \\
    &= \underbrace{[(a/2^{j_a}) \le (x_1 \dots x_{n-j_a})_2]}_{f_a} \oplus \underbrace{[(b/2^{j_{b}}) \le (x_1 \dots x_{n-j_{b}})_2]}_{f_{b}} \\
    &= f_a \oplus f_{b}
\end{aligned},
\]
where $f_a$ and $f_b$ are AND/OR chains with a respective length of $n-j_a-1$ and $n-j_b-1$, see Corollary~\ref{lem:compare}. Therefore, $\max\{\ell(f_a), \ell(f_{b})\} = n - \min\{j_a, j_{b}\} - 1$.  From Lemma~\ref{lem:constr-ub}, it follows that $c_\land(f) \le n - \min\{j_a, j_{b}\} - 1 - \delta_{j_aj_{b}}$, and from Lemma~\ref{lem:constr-lb} together with Lemma~\ref{lem:degree-bound}, it follows that $c_\land(f) \ge \deg(f)-1 = n - \min\{j_a, j_{b}\} - 1 - \delta_{j_aj_{b}}$. Therefore,
\[
	c_\land(f) = n - \min\{j_a, j_{b}\} - 1 - \delta_{j_aj_{b}},
\]
where $\delta_{ij}$ denotes the Kronecker delta.
\end{proof}

\section{Conclusions}
We have derived the multiplicative complexity of interval checking given two constant bounds.  Our construction is of practical interest as it reduces the cost of interval checking by up to a factor of~2 compared to a construction composed of two comparators.  This motivates us to study the multiplicative complexity of similar composite operations, e.g., $[x = y \pm a]$ or $[a \le x + y]$, where $a$ is a constant.  This may in turn provide some insight into the multiplicative complexity of other practically-relevant operations such as multiplication, which can be considered a composition of simpler arithmetic operations.

We are further interested in studying the multiplicative complexity of formulas $f = \ite(x, f_1, f_2)$.  In general $c_\land(f) \le 1 + c_\land(f_1) + c_\land(f_2)$, but in this paper we found examples in which the multiplicative complexity of $f$ did not exceed that of the two subformulas.  Ideally, we would like to find characteristic properties for $f_1$ and $f_2$ that hold if and only if $c_\land(f) \le \max\{c_\land(f_1), c_\land(f_2)\}$.

\bibliographystyle{abbrv}
\bibliography{library}

\appendix

%

\end{document}